\newcommand{\blind}{0}
\newtheorem{theorem}{\bf Theorem}
\newtheorem{proposition}{\bf Proposition}
\numberwithin{equation}{section}
\begin{document}

\def\spacingset#1{\renewcommand{\baselinestretch}%
{#1}\small\normalsize} \spacingset{1}


\if0\blind
{\title{Improved EM for Mixture Proportions with Applications to Nonparametric ML Estimation for Censored Data}
  \author{Yaming Yu\\
    Department of Statistics, University of California, Irvine}
  \maketitle
} \fi

\if1\blind
{
  \bigskip
  \bigskip
  \bigskip
  \begin{center}
    {\LARGE\bf Improved EM for Mixture Proportions with Applications to Nonparametric ML Estimation for Censored Data}
\end{center}
  \medskip
} \fi

\bigskip
\begin{abstract}
Improved EM strategies, based on the idea of efficient data augmentation (Meng and van Dyk 1997, 1998), are presented for ML estimation of mixture proportions.  The resulting algorithms inherit the simplicity, ease of implementation, and monotonic convergence properties of EM, but have considerably improved speed.  Because conventional EM tends to be slow when there exists a large overlap between the mixture components, we can improve the speed without sacrificing the simplicity or stability, if we can reformulate the problem so as to reduce the amount of overlap.  We propose simple ``squeezing'' strategies for that purpose.  Moreover, for high-dimensional problems, such as computing the nonparametric MLE of the distribution function with censored data, a natural and effective remedy for conventional EM is to add exchange steps (based on improved EM) between adjacent mixture components, where the overlap is most severe.  Theoretical considerations show that the resulting EM-type algorithms, when carefully implemented, are globally convergent.  Simulated and real data examples show dramatic improvement in speed in realistic situations. 
\end{abstract}

\noindent%
{\it Keywords:} AECM; cocktail algorithm; data augmentation; doubly censored data; EM; global convergence; NPMLE; nonparametric mixtures; squeezing; vertex exchange method. 

\spacingset{1.45}

\section{Introduction}
Several statistical problems give rise to a likelihood function formally equivalent to that of a finite mixture model with known component densities.  One example is maximum likelihood (ML) estimation in a saturated multinomial model with ignorable missing data where some units are partially classified (Dempster et al.\ 1977).  Another example is nonparametric ML estimation (NPMLE) of a mixing distribution (Lindsay 1983) when this distribution is assumed to be supported on a finite grid.  Closely related is the NPMLE problem for the distribution function for censored data (Groeneboom and Wellner 1992; B\"{o}hning et al.\ 1996).  The EM algorithm (Dempster et al.\ 1977; Meng and van Dyk 1997) is among the simplest and best known methods for ML computation in such mixture-like problems; Turnbull (1976) used it on censored data before Dempster et al.\ (1977) laid down the general framework.  The potential slow convergence of EM in general, and for NPMLE computation in particular, is also well documented.  Other methods of computing the NPMLE include the iterative convex minorant (ICM) algorithm (Aragon and Eberly 1992; Jongbloed 1998), the vertex exchange method (VEM; B\"{o}hning et al.\ 1996), and constrained Newton methods (Wang 2008).  EM or EM-like algorithms are also widely used for related problems such as optimal experimental design (Silvey et al.\ 1978; Yu 2010a), Poisson image reconstruction using positron emission tomography (Vardi et al.\ 1985), and channel capacity calculations in Shannon theory (Arimoto 1972; Blahut 1972; Csisz\'{a}r and Tusn\'{a}dy 1984). 

This paper is concerned with improved EM strategies for maximizing a finite mixture log-likelihood with known component densities.  Possible extensions to more general problems are mentioned in Section~5.  Our main motivation is fast computation of the NPMLE for censored data.  The NPMLE problem is challenging partly because of the high dimension (there are many mixture components), and the heavy overlap between components, which slows down conventional EM.  Our goal is to design algorithms that improve the speed of EM, but preserve its simplicity, ease of implementation, and  monotonic convergence properties.  First, we introduce ``squeezing'' strategies that reformulate the problem so as to reduce the overlap between component densities.  Such  squeezing strategies capitalize on the idea of efficient data augmentation and are inspired by Fessler and Hero (1994).  The resulting EM algorithms converge faster because they correspond to augmented data that are less informative.  Secondly, we observe that although ``squeezing'' may not always be effective for the entire collection of mixture components, we can apply it to sub-collections that overlap most severely.  Adding such EM-based conditional maximization steps (nearest neighbor exchanges) can improve the speed dramatically.  Overall, our algorithms fit in the broad spectrum of alternating-expectation-conditional-maximization (AECM) schemes (Meng and van Dyk 1997, 1998).  The simplicity and effectiveness of these algorithms testify to the advantage of working within the general EM framework (Dempster et al.\ 1977; Wu 1983; Meng and Rubin 1993; Liu and Rubin 1994; Meng and van Dyk 1997; Liu et al.\ 1998).  Also relevant is the work of Pilla and Lindsay (2001), who focus on the nonparametric mixture problem and propose pairing nearby components and rotating the pairs for fast ML computation. 

In Section~2, we introduce the squeezing strategies to improve EM for maximizing a mixture log-likelihood.  Intuitively, squeezing yields an equivalent problem where the mixture components have less overlap, and its effect on the speed of EM is explained in terms of efficient data augmentation.  Section~3 argues that squeezing strategies can be effectively implemented to sub-collections of the mixture components.  This leads to a ``cocktail algorithm'' with several different moves that complement each other.  A global convergence theorem for the cocktail algorithm is proved.  A real-data example is included as an illustration.  Section~4 focuses on the NPMLE problem for censored data, and demonstrates the effectiveness of our new algorithms through simulation.  Section~5 concludes with a discussion on possible extensions (to the bivariate interval censoring problem, for example).  Efficient implementations of our EM-type algorithms, which take advantage of the sparsity features of the NPMLE problem for censored data, are collected in the appendix. 

\section{EM Algorithms for Mixture Proportions}
Suppose $n$ observations $y=(y_1,\ldots, y_n)$ are taken from a mixture of $m$ known densities with unknown proportions $p_1, \ldots, p_m$.  Writing $f_{ij}$ as the $j$th component density evaluated at $y_i$, we can express the log-likelihood for $\mathbf{p}=(p_1,\ldots, p_m)$ as 
\begin{equation}
\label{llike}
l(\mathbf{p})=\sum_{i=1}^n \log\left(\sum_{j=1}^m f_{ij} p_j\right).
\end{equation}
We seek to maximize (\ref{llike}) over $\mathbf{p}\in \Theta$ where 
$$\Theta=\left\{\mathbf{p}:\ \sum_{j=1}^m p_j=1,\ p_j\geq 0;\ l(\mathbf{p})>-\infty\right\}.$$ 

\subsection{Conventional EM}
Conventional EM introduces latent indicators $I_{ij}$ such that $I_{ij}=1$ if the $i$th observation is from component $j$, and $I_{ij}=0$ otherwise.  At iteration $t$, when the current estimate of $\mathbf{p}$ is $\mathbf{p}^{(t)}=\left(p_1^{(t)}, \ldots, p_m^{(t)}\right)$, the E-step simply computes the 
conditional expectation of $I_{ij}$ given observed data and $\mathbf{p}^{(t)}$:
$$E\left(I_{ij}\left|y, \mathbf{p}^{(t)}\right)\right.= \frac{f_{ij}p_j^{(t)}}{\sum_{k=1}^m f_{ik} p_k^{(t)}}.$$
The M-step then sets $p_j^{(t+1)}=\sum_{i=1}^n E\left(I_{ij}|y, \mathbf{p}^{(t)}\right)/n.$  Overall each iteration can be written as
\begin{equation}
p_j^{(t+1)}=\frac{1}{n}\sum_{i=1}^n \left(\frac{f_{ij}}{\sum_{k=1}^m f_{ik} p_k^{(t)}}\right) p_j^{(t)},\quad 
j=1,\ldots, m. 
\label{em}
\end{equation} 
The EM algorithm maintains monotone increase in the log-likelihood, i.e., $l(\mathbf{p}^{(t+1)})\geq l(\mathbf{p}^{(t)})$.  
Moreover, when started from the interior of the parameter space, i.e., when $p^{(0)}_j>0$ for all $1\leq j\leq m$, EM is guaranteed to converge to $\mathbf{\hat{p}}$, the MLE (Csisz\'{a}r and Tusn\'{a}dy 1984).  Convergence is potentially very 
slow, however, when there exists heavy overlap among the mixture components. 

\subsection{Squeezing Strategy I}
To improve conventional EM, let us introduce an auxiliary vector $g=(g_1, \ldots, g_n)$ and write the objective function (\ref{llike}) as
\begin{align}
\label{llike0}
l(\mathbf{p})&=n\log 2+\sum_{i=1}^n \log\left(\sum_{j=1}^m g_i p_j/2+\sum_{j=1}^m (f_{ij}-g_i) p_j/2 \right)\\
\label{llike1}
&=n\log 2+\sum_{i=1}^n \log\left(g_i/2+\sum_{j=1}^m (f_{ij}-g_i) p_j/2 \right).
\end{align}
We require $g_i\geq 0$ and $f_{ij}-g_i\geq 0$ for all $i$, i.e.,
\begin{equation}
\label{cond1}
0\leq g_i\leq \min_j f_{ij}.
\end{equation}

Conventional EM (\ref{em}) can also be derived from (\ref{llike0}), viewing it as a mixture log-likelihood with $2m$ components with proportions $p_j/2,\ j=1,\ldots, m$, each appearing twice.  Specifically, under this new formulation, we let the density of component $j$ ($j=1, \ldots, 2m$) evaluated at observation $i$ be 
 $$\tilde{f}_{ij}=\begin{cases} f_{ij}-g_i, & 1\leq j\leq m,\\ g_i, & m+1\leq j\leq 2m.\end{cases}$$ 
Let $\tilde{I}_{ij}$ be the latent indicator of whether the $i$th observation is from component $j,\ j=1,\ldots, 2m$.  The $j$th component has proportion $p_j/2$ if $1\leq j\leq m$, and $p_{j-m}/2$ if $m+1\leq j\leq 2m$.  Then the E-step becomes 
\begin{equation}
\label{e0}
E\left(\left.\tilde{I}_{ij}\right|y, \mathbf{p}^{(t)}\right)= \frac{\tilde{f}_{ij}}{g_i+\sum_{k=1}^m \tilde{f}_{ik} p_k^{(t)}}\times\begin{cases} p_j^{(t)}, & 1\leq j\leq m,\\ p_{j-m}^{(t)}, & m+1\leq j\leq 2m, \end{cases}
\end{equation}
and the M-step becomes 
\begin{equation}
\label{m0}
p_j^{(t+1)}\propto \sum_{i=1}^n E\left(\tilde{I}_{ij}+\left.\tilde{I}_{i,j+m}\right|y, \mathbf{p}^{(t)}\right),\quad j=1,\ldots, m.
\end{equation}
Routine algebra reveals that the resulting EM iteration is the same as (\ref{em}). 

We can also apply EM to maximize (\ref{llike1}), viewing it as a problem with $m+1$ mixture components, one of which has proportion $1/2$.  Equivalently, in the above derivation of (\ref{e0}) and (\ref{m0}), instead of $\tilde{I}_{ij},\ 1\leq i\leq n,\ 1\leq j\leq 2m$, let us treat 
$$\tilde{I}_{ij},\ 1\leq i\leq n,\ 1\leq j\leq m,\quad {\rm and}\quad \tilde{I}_{i0}\equiv \sum_{j=m+1}^{2m} \tilde{I}_{ij},\ 1\leq i\leq n,$$ 
as the set of latent indicators.  The latter, being a collapsed version of the former, contains less information about $\mathbf{p}$.  Note that $\tilde{I}_{i0}$ is the indicator of a mixture component with proportion $1/2$, whose density evaluated at observation $i$ is $g_i$.  The E-step proceeds to calculate the conditional expectation of $\tilde{I}_{ij},\ j=0, 1, \ldots, m$, resulting in the same formula as (\ref{e0}) for $j=1, \ldots, m$.  The M-step becomes
\begin{equation}
\label{m1}
p_j^{(t+1)}\propto \sum_{i=1}^n E\left(\left.\tilde{I}_{ij}\right|y, \mathbf{p}^{(t)}\right),\quad j=1,\ldots, m.
\end{equation}
The conditional expectation of $\tilde{I}_{i0}$ does not appear in (\ref{m1}).  Combining (\ref{e0}) (for $j=1,\ldots, m$) with (\ref{m1}), we obtain an EM iteration as 
\begin{equation}
p_j^{(t+1)} =\left(n-\sum_{i=1}^n \frac{g_i}{\sum_{k=1}^m f_{ik} p_k^{(t)}}\right)^{-1} \sum_{i=1}^n \left(\frac{f_{ij}-g_i}{\sum_{k=1}^m f_{ik} p_k^{(t)}}\right) p_j^{(t)},\quad j=1,\ldots, m.
\label{em1}
\end{equation} 
The iteration (\ref{em}) corresponds to (\ref{em1}) with $g\equiv 0$. 

Because (\ref{em1}) is derived in the EM framework, it inherits nearly all the desirable properties of (\ref{em}).  For example, each iteration of (\ref{em1}) increases the log-likelihood (\ref{llike1}).  Furthermore, because the convergence rate of EM is determined by the fraction of missing information, we know that (\ref{em1}) converges faster than (\ref{em}) because it is based on a reduced set of latent variables.  This is an example of efficient data augmentation (Meng and van Dyk 1997): we speed up EM by augmenting less.  This strategy of improving (\ref{em}) is called a squeezing strategy because the key equivalent formula (\ref{llike1}) is obtained by subtracting (``squeezing out'') a nonnegative vector $g$ from each component density. 

A slight extension of the above discussion shows that the convergence rate of (\ref{em1}) is monotonic in the squeezing parameter $g$.  Because of the restriction (\ref{cond1}), the optimal $g$ is therefore the upper bound $g_i=\min_j f_{ij}$, that is, we perform as much squeezing as allowed. 
This optimal $g$ may be viewed as the overlap among all component densities.  If this overlap is small, i.e., $g$ is close to a vector of zeros, then (\ref{em1}) is not very different from (\ref{em}).  Hence we may expect significant speedup only if the overlap is large enough. 

This squeezing strategy (as well as the strategy of Section~2.3) is inspired by the work of Fessler and Hero (1994) on efficient EM for Poisson image problems.  Similar strategies also work for the Arimoto-Blahut algorithm for calculating the Shannon capacity of a discrete memoryless channel (see Yu 2010b). 

\subsection{Squeezing Strategy II}
The squeezing strategy of Section~2.2 can be improved by further manipulation of the log-likelihood.  Let us introduce another auxiliary vector $\beta=(\beta_1,\ldots, \beta_m)$, and rewrite (\ref{llike1}) as
\begin{align}
\label{llike00}
l(\mathbf{p})&=n\log (2+\beta_+) + \sum_{i=1}^n \log \frac{g_i-\sum_{j=1}^m \tilde{f}_{ij} \beta_j + \sum_{j=1}^m \tilde{f}_{ij} p_j +\sum_{j=1}^m \tilde{f}_{ij} \beta_j}{2+\beta_+}\\
\label{llike2}
&=n\log (2+\beta_+) + \sum_{i=1}^n \log\frac{g_i-\sum_{j=1}^m \tilde{f}_{ij} \beta_j +\sum_{j=1}^m \tilde{f}_{ij} (p_j+\beta_j)}{2+\beta_+},
\end{align}
where $\tilde{f}_{ij}=f_{ij}-g_i$ as before, and $\beta_+=\sum_{j=1}^m \beta_j$.  It is required that 
\begin{equation}
\label{cond2}
\beta_j\geq 0,\ 1\leq j\leq m;\quad g_i-\sum_{j=1}^m \tilde{f}_{ij}\beta_j \geq 0,\ 1\leq i\leq n.
\end{equation}
When $m=2$ and $g_i=\min_j f_{ij}$, (\ref{cond2}) is equivalent to 
\begin{equation}
\label{cond22}
0\leq \beta_1\leq \min_{i:\, f_{i1}>f_{i2}} \frac{f_{i2}}{f_{i1}-f_{i2}},
\quad 0\leq \beta_2\leq \min_{i:\, f_{i2}>f_{i1}} \frac{f_{i1}}{f_{i2}-f_{i1}}.
\end{equation}
In general, however, it is not clear how to reduce (\ref{cond2}) to an explicit range for $\beta$.  Although we do address the choice of $\beta$ in this section, further study is desired. 

Iteration (\ref{em1}) can also be derived from (\ref{llike00}), viewing it as a mixture log-likelihood with $2m+1$ components.  The density of the $j$th component ($0\leq j\leq 2m$) evaluated at observation $i$ is
 $$f^\#_{ij}=\begin{cases} g_i-\sum_{k=1}^m \tilde{f}_{ik} \beta_k, & j=0,\\
 \tilde{f}_{ij}, & 1\leq j\leq m, \\ \tilde{f}_{i,j-m}, & m+1\leq j\leq 2m.\end{cases}$$ 
The proportion of the $j$th component is $(2+\beta_+)^{-1}$ if $j=0$, $p_j/(2+\beta_+)$ if $1\leq j\leq m$, and $\beta_{j-m}/(2+\beta_+)$ if $m+1\leq j\leq 2m$.  The factor $2+\beta_+$ makes these proportions sum to one. 
Let $I^\#_{ij}$ denote the indicator of whether observation $i$ is from component $j$.  Similar to Section~2.2, if we treat $I^\#_{ij},\ 1\leq i\leq n,\ 0\leq j\leq 2m,$ as latent variables, then the resulting EM iteration is precisely (\ref{em1}). 

On the other hand, we can derive an EM iteration based on (\ref{llike2}), viewing it as a mixture log-likelihood with $m+1$ components, one of which has proportion $(2+\beta_+)^{-1}$, and the others have proportions $(p_j+\beta_j)/(2+\beta_+),\ j=1,\ldots, m$.  Equivalently, we treat 
\begin{equation}
\label{latent2}
I^\#_{i0},\ 1\leq i\leq n,\quad {\rm and}\quad I^\#_{ij}+I^\#_{i,j+m},\ 1\leq i\leq n,\ 1\leq j\leq m,
\end{equation}
as latent indicators instead of the entire collection $I^\#_{ij},\ 1\leq i\leq n,\ 0\leq j\leq 2m.$  The E-step, as before, is to calculate the conditional expectations.  We have 
$$E\left(I^\#_{ij}+\left. I^\#_{i,j+m}\right| y, \mathbf{p^{(t)}}\right) =\frac{\tilde{f}_{ij} \left(p_j^{(t)}+\beta_j\right)}{g_i +\sum_{k=1}^m \tilde{f}_{ik} p_k^{(t)}}\equiv K_{ij}.$$
The M-step seeks to maximize the function
\begin{equation}
\label{m2fun}
\sum_{j=1}^m \sum_{i=1}^n K_{ij} \log (p_j+\beta_j).
\end{equation}
By checking the Karush-Kuhn-Tucker conditions, it can be shown that (\ref{m2fun}) is maximized 
by choosing $p_j$ as
\begin{equation}
\label{em2}
p_j^{(t+1)}=\max\left\{0,\ \delta \sum_{i=1}^n K_{ij}-\beta_j\right\},\quad 1\leq j\leq m,
\end{equation}
where $\delta$ is determined by the constraint $\sum_{j=1}^m p_j^{(t+1)}=1$.  Iteration (\ref{em1}) corresponds to (\ref{em2}) with $\beta_j\equiv 0$. 

The new EM iteration (\ref{em2}) is more complicated than (\ref{em1}) or (\ref{em}), but only slightly so.  First, we observe that the right-hand side of (\ref{em2}) is a continuous and increasing function of $\delta$, and hence a $\delta$ exists to ensure $\sum_{j=1}^m p_j^{(t+1)}=1$, as long as $\sum_i K_{ij}>0$ for some $j$.  Moreover, such a $\delta$ can be found efficiently (in $O(m\log m)$ time) using a ``waterfilling'' algorithm (see Appendix~A in Yu 2010b). 

The convergence rate of (\ref{em2}) is no worse than that of (\ref{em1}), because the latent variables (\ref{latent2}) are less informative than the entire collection $I^\#_{ij},\ 1\leq i\leq n,\ 0\leq j\leq 2m,$ which lead to (\ref{em1}).  A slight extension of this argument reveals that, for fixed $g$, the convergence rate of (\ref{em2}) is monotonic in $\beta=(\beta_1, \ldots, \beta_m)$.  Such results resemble those of Yu (2010b) on improved Arimoto-Blahut algorithms for channel capacity calculations.  In Yu (2010b), the convergence rate comparison results are derived by calculating the matrix rate and analyzing its eigenvalues.  Because we work within the EM framework, however, the convergence rate comparison results here are obtained automatically once we specify the appropriate latent variables. 

Combined with the convergence rate comparisons of Section~2.2, the above considerations suggest the following guideline for choosing the squeezing parameters $g$ and $\beta$. 
\begin{itemize}
\item
Choose $g_i=\min_j f_{ij}$, which satisfies the upper bound in (\ref{cond1}).
\item
Choose $\beta_j$ to be as large as possible, subject to (\ref{cond2}). 
\end{itemize}
When $m=2$, the condition (\ref{cond2}) reduces to (\ref{cond22}).  Hence we recommend setting $\beta$ at the upper bounds in (\ref{cond22}). 

It is helpful to write down an explicit formula for (\ref{em2}) with these optimal choices of $g$ and $\beta$ in the $m=2$ case.  Actually, for later convenience, we present a slightly more general iterative formula for maximizing ($m=2$) 
\begin{equation}
\label{llike22}
\tilde{l}(\mathbf{p})=\sum_{i=1}^n \log\left(r_i+ f_{i1} p_1 +f_{i2} p_2\right),
\end{equation}
subject to $p_j\geq 0,\ j=1,2$, and $p_1+p_2=\beta_0$.  Here $r_i,\ i=1,\ldots, n,$ are nonnegative constants, and $\beta_0>0$ is fixed.  Define $g_i=\min\{f_{i1},\ f_{i2}\}$, and 
$$\beta_1=\min_{i:\, f_{i1}>f_{i2}} \frac{r_i+\beta_0 f_{i2}}{f_{i1}-f_{i2}},\quad \beta_2=\min_{i:\, f_{i2}>f_{i1}} \frac{r_i+\beta_0 f_{i1}}{f_{i2}-f_{i1}}.$$
Suppose the current parameter estimate is $\mathbf{p}^{(t)}$.  We compute 
$$S_j=\left(p^{(t)}_j +\beta_j\right)\sum_{i=1}^n \frac{f_{ij}-g_i}{r_i+f_{i1} p_1^{(t)}+f_{i2} p_2^{(t)}},\quad j=1,2.$$
Then we update $\mathbf{p}^{(t)}$ as
\begin{equation}
\label{em22}
p^{(t+1)}_j = \max\{0,\ \min\{\beta_0,\ (\beta_0 +\beta_1+\beta_2)S_j /(S_1+S_2)-\beta_j\}\},\quad j=1,2, 
\end{equation}
which is a slight generalization of (\ref{em2}) for $m=2$.  The iteration (\ref{em22}) is uniquely defined if $S_1+S_2\neq 0$, for which it suffices to have $f_{i1}-f_{i2}\not\equiv 0$.  (Inspection shows that $p_j^{(t)}+\beta_j>0,\ j=1,2,$ as long as $\tilde{l}(\mathbf{p}^{(t)})>-\infty$.)  If $f_{i1}-f_{i2}\equiv 0$, then the M-step is not unique, but it is convenient to set $\mathbf{p}^{(t+1)}=\mathbf{p}^{(t)}$.  Because of the $\max$ and $\min$ operations, (\ref{em22}) can potentially transfer all the mass from one component to the other in a single step.  This will be especially useful in later sections after we introduce nearest neighbor exchanges. 

\section{Nearest Neighbor Exchanges and the Cocktail Algorithm}
\subsection{Nearest Neighbor Exchanges}
The intuition that conventional EM tends to be slow when there exists heavy overlap between mixture components is used to our advantage in Section~2 for designing faster EM schemes via squeezing.  However, the strategies so far begin by squeezing out a common vector $g$ from {\it each} of the components.  When there exist many components, it is conceivable that squeezing applied to all components may not be effective, even though a sub-collection may have severe overlap.  Then it is worthwhile to apply some form of ``local squeezing'' to a sub-collection of components. 

For example, consider a nonparametric mixture problem where the observations $y_i$ are assumed to be drawn independently from a mixture of normals 
$$y_i\sim \sum_{j=1}^m p_j{\rm N}(\mu_j, 1).$$
The variance is fixed for simplicity.  We assume the mixing distribution puts mass $p_j$ on ${\rm N}(\mu_j, 1)$, where $\mu_j$ is obtained by discretizing an interval, say $\mu_j=Uj/m$, and $U>0$ denotes the largest of the normal means.  As $m$ increases, the overlap between adjacent densities ${\rm N}(\mu_{j}, 1)$ and ${\rm N}(\mu_{j+1}, 1)$ increases, and conventional EM slows down.  The global squeezing strategies may not be effective, however, because the overlap between the left-most density ${\rm N}(U/m, 1)$ and the right-most density ${\rm N}(U, 1)$ can still be small. 

A natural remedy, therefore, is to exchange the mass between each pair of nearby components in turn, holding the other components fixed.  This is similar to (but somewhat simpler than) the paired and rotated EM of Pilla and Lindsay (2001).  In general, given the current parameter estimate $\mathbf{p}^{(t)}$, let $j_1<\cdots<j_{q+1}$ be the elements of $\left\{j:\ p_j^{(t)}>0\right\}$ where $q+1$ is the number of support points of $\mathbf{p}^{(t)}$.  We perform mass exchanges between $j_k$ and $j_{k+1}$ for $k=1,\ldots, q$ in turn, i.e., 
\begin{equation}
\label{nne}
\mathbf{p}^{(t+k/q)}=VE\left(j_k, j_{k+1}, \mathbf{p}^{(t+(k-1)/q)}\right),\quad k=1,\ldots, q.
\end{equation}
We use $\mathbf{\tilde{p}}=VE(u, v, \mathbf{p}),\ u\neq v,$ to denote an update of the form 
$$\tilde{p}_j=\begin{cases} p_j, & j\notin \{u,\, v\},\\ p_j+\delta, & j=u,\\ p_j-\delta, & j=v,\end{cases}$$
where $\delta\in [-p_u, p_v]$ is chosen so that $l(\mathbf{\tilde{p}})\geq l(\mathbf{p})$.  To choose the step-length $\delta$, we naturally use (\ref{em22}).  The iteration (\ref{em22}) is applicable because, when other components are held fixed, the log-likelihood for $p_u, p_v$ is exactly in the form of (\ref{llike22}).  Because (\ref{em22}) is an EM iteration, the log-likelihood is automatically monotonic.  Moreover, (\ref{em22}) is easy to implement, and very amenable to theoretical analysis.  These all add to the appeal of (\ref{em22}) when compared with standard tools such as Newton's method.  We refer to the composite mapping $\mathbf{p}^{(t)}\to \mathbf{p}^{(t+1)}$ given by (\ref{nne}) as the set of nearest neighbor exchanges (NNEs). 

We have found that by adding nearest neighbor exchanges based on (\ref{em22}) to conventional EM can lead to considerably improved speed.  There is one caveat, however.  Conventional EM is usually started at the interior of the parameter space, i.e., $p_j^{(0)}>0$ for all $j$, because once a component receives zero mass, it does so in all subsequent iterations.  By adding nearest neighbor exchanges, certain $p_j$ may be set to zero.  While this has the desirable effect of eliminating bad support points, it may accidentally eliminate a good one, and yield a suboptimal solution.  The problem is easily remedied, however, by adding in the following step, known as the vertex direction method (VDM; Fedorov 1972).  Given the current parameter estimate $\mathbf{p}$, we first calculate the derivatives 
$$d_j=\frac{\partial l(\mathbf{p})}{\partial p_j}=\sum_{i=1}^n \frac{f_{ij}}{\eta_i},$$
where $\eta_i=\sum_{j=1}^m f_{ij} p_j$.  Let $j^\#$ denote any index such that $d_j,\ j=1,\ldots, m,$ is maximized.  Then we update $\mathbf{p}$ to $\mathbf{\tilde{p}}$ with 
\begin{equation}
\label{vdm}
\tilde{p}_j=\begin{cases} (1-\delta) p_j, & j\neq j^\#,\\ (1-\delta) p_j +\delta, & j=j^\#,\end{cases}
\end{equation}
where $\delta\in [0,1]$ is chosen so that $l(\mathbf{\tilde{p}})\geq l(\mathbf{p})$.  We use iteration (\ref{em22}) for choosing $\delta$ because $l(\mathbf{\tilde{p}})$ as a function of $(\delta, 1-\delta)$ is again in the form of (\ref{llike22}). 

Let us denote the mapping (\ref{vdm}) with $\delta$ chosen by (\ref{em22}) as $\mathbf{\tilde{p}}=VDM(j^\#, \mathbf{p})$.  In Section~3.2 we show that VDM based on (\ref{em22}), when added to conventional EM and nearest neighbor exchange iterations, results in a globally convergent algorithm.  That is, starting from any $\mathbf{p}^{(0)}\in \Theta$, all limit points of the resulting algorithm are global maxima of the log-likelihood function on $\Theta$.  The proof actually shows that by adding VDM to any monotonic algorithm we obtain a globally convergent algorithm.

\subsection{The Cocktail Algorithm}
We summarize a ``cocktail algorithm'' based on VDM, nearest neighbor exchanges, and EM steps.  A convergence proof is then provided.  Empirical evaluation of such a strategy is presented in Section~3.3.  Yu (2009) applies this strategy to the D-optimal design problem, and reports dramatic improvement in speed.  We show similar performance for the mixture problem. 

\begin{paragraph}
{\it Cocktail Algorithm}
\begin{description}
\item[1]
At iteration $t$, first perform a VDM step (\ref{vdm}) where $\delta$ is chosen using (\ref{em22}).
\item[2]
Then use the output of VDM, say $\mathbf{\tilde{p}}$, as input for the nearest neighbor exchanges, i.e., (\ref{nne}), again based on (\ref{em22}).  
\item[3]
Finally, update the output of (\ref{nne}) using (\ref{em}), i.e., conventional EM, to obtain the next iterate $\mathbf{p}^{(t+1)}$. 
\end{description}
\end{paragraph}

Note that this is only one of the potential algorithms based on reducing the overlap between component densities.  There is much room for further exploration.  One could consider, for example, an algorithm that uses only Steps~1 and 2 above at each iteration.  That is, a VDM step is combined with nearest neighbor exchanges.  We call this algorithm NNE+.  We design the cocktail algorithm in the hope that the nearest neighbor steps and conventional EM can complement each other, since NNE+ focuses on purely local modifications, whereas EM focuses on purely global ones.  As we shall illustrate in empirical examples, the performance of NNE+ or conventional EM (each by itself) can be poor, but the cocktail algorithm is very fast. 

Because the cocktail algorithm consists of many EM sub-steps, the log-likelihood is guaranteed to increase at each iteration.  Further analysis yields the following convergence theorem. 

\begin{theorem}
\label{conv}
The cocktail algorithm is globally convergent.  That is, if $\mathbf{p}^{(t)}$ is a sequence generated by the cocktail algorithm starting from any $\mathbf{p}^{(0)}\in \Theta$, then all limit points of $\mathbf{p}^{(t)}$ are global maxima of $l(\mathbf{p})$ on $\mathbf{p}\in \Theta$. 
\end{theorem}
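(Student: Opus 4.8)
The plan is to use the standard machinery for convergence of EM-type algorithms — in the spirit of Wu (1983) — combined with the characterization of the global maximum via the vertex-direction derivatives. First I would establish the basic ingredients. Since each of Steps 1–3 of the cocktail algorithm is an EM (or conditional-maximization) sub-step, the composite map $M:\Theta\to\Theta$ sending $\mathbf{p}^{(t)}$ to $\mathbf{p}^{(t+1)}$ satisfies $l(M(\mathbf{p}))\geq l(\mathbf{p})$ for all $\mathbf{p}\in\Theta$, with equality only at fixed points of the relevant sub-maps. The set $\Theta$ is compact (it is a closed subset of the simplex; the constraint $l(\mathbf{p})>-\infty$ only removes points, but since $l$ is upper semicontinuous and the iterates keep $l$ above its starting value, the sequence stays in a compact sublevel-type set), so the sequence $\{\mathbf{p}^{(t)}\}$ has limit points, and $l(\mathbf{p}^{(t)})$ increases to a finite limit $l^\ast$; hence every limit point attains value $l^\ast$. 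It remains to show any limit point $\mathbf{p}^\ast$ is a global maximum.

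Next I would argue that $M$ is continuous (or at least closed as a point-to-set map) on the relevant region, so that a limit point $\mathbf{p}^\ast$ of $\{\mathbf{p}^{(t)}\}$ is a fixed point of $M$: along a subsequence $\mathbf{p}^{(t_k)}\to\mathbf{p}^\ast$, we have $\mathbf{p}^{(t_k+1)}\to M(\mathbf{p}^\ast)$ and both $l(\mathbf{p}^{(t_k)})$ and $l(\mathbf{p}^{(t_k+1)})$ converge to $l^\ast$, forcing $l(M(\mathbf{p}^\ast))=l(\mathbf{p}^\ast)$, and then by the monotonicity-with-equality-only-at-fixed-points property of each sub-step, $M(\mathbf{p}^\ast)=\mathbf{p}^\ast$. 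The delicate point here is that the VDM sub-step $VDM(j^\#,\mathbf{p})$ depends on the index $j^\#$ maximizing $d_j(\mathbf{p})$, which is not a continuous function of $\mathbf{p}$ when ties occur; I would handle this by treating $M$ as a closed point-to-set map (taking the union over all maximizing indices), which is the standard fix in Zangwill-type global-convergence arguments, and noting that the monotonicity property holds for every selection.

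The crux is the final step: a fixed point $\mathbf{p}^\ast$ of the whole cocktail map must be a global maximum. Here the VDM step does the essential work. If $\mathbf{p}^\ast$ is fixed under Step 1, then running $VDM(j^\#,\mathbf{p}^\ast)$ with $\delta$ chosen by the EM-type rule (\ref{em22}) produces no increase; but (\ref{em22}) applied to the one-parameter family $(\delta,1-\delta)$ in the form (\ref{llike22}) will strictly increase $l$ whenever the derivative of $l$ in the VDM direction at $\delta=0$ is positive, i.e. whenever $d_{j^\#}(\mathbf{p}^\ast)=\max_j d_j(\mathbf{p}^\ast)>\sum_i\sum_j f_{ij}p_j^\ast/\eta_i^\ast=\sum_i 1 = n$ — wait, more precisely the directional derivative is $d_{j^\#}-\sum_j p_j^\ast d_j$, and $\sum_j p_j^\ast d_j=\sum_i\eta_i^\ast/\eta_i^\ast=n$. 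So being fixed under VDM forces $\max_j d_j(\mathbf{p}^\ast)\leq n=\sum_j p_j^\ast d_j(\mathbf{p}^\ast)$, which combined with the obvious reverse inequality gives $d_j(\mathbf{p}^\ast)\leq n$ for all $j$ with equality whenever $p_j^\ast>0$. By the standard Kiefer–Wolfowitz / Lindsay characterization of the NPMLE (the log-likelihood (\ref{llike}) is concave in $\mathbf{p}$ on the simplex), this gradient condition is exactly the first-order optimality condition, and by concavity it is sufficient for $\mathbf{p}^\ast$ to be a global maximum. I would spell this out: concavity of $l$ plus the KKT/Fenchel condition $d_j\leq n$ for all $j$, $d_j=n$ on the support, implies $l(\mathbf{p}^\ast)\geq l(\mathbf{p})$ for all $\mathbf{p}\in\Theta$. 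The main obstacle, as flagged, is the discontinuity/tie issue in selecting $j^\#$; once that is dispatched via the point-to-set/closed-map formulation, the rest is assembling standard pieces, and I would remark (as the excerpt already hints) that this argument shows adding VDM to \emph{any} monotone algorithm yields global convergence, since only Step 1 is used in the final contradiction.
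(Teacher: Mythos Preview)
Your overall strategy---monotonicity, then show any limit point is fixed under the VDM step, then invoke the general equivalence theorem / KKT characterization---is right, and your directional-derivative computation $\sum_j p_j^\ast d_j(\mathbf{p}^\ast)=n$ is correct. But there is a real gap in the middle of the argument. You claim the full cocktail map $M$ is continuous, or at least closed as a point-to-set map, and you flag only the VDM index selection as ``the delicate point.'' In fact the NNE step (Step~2) also depends discontinuously on $\mathbf{p}$: both the support set $\{j:p_j>0\}$ and the number $q$ of exchange sub-steps change when a component hits zero, so establishing closedness of the NNE composite is not automatic, and you do not address it. Without that, you cannot conclude $\mathbf{p}^{(t_k+1)}\to M(\mathbf{p}^\ast)$, and the ``limit point is a fixed point of $M$'' step fails.

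The paper's proof avoids this difficulty entirely by never analyzing $M$ as a whole. It uses Steps~2--3 only for the monotonicity sandwich $l(\mathbf{p}^{(t)})\le l(\tilde{\mathbf{p}}^{(t)})\le l(\mathbf{p}^{(t+1)})$, where $\tilde{\mathbf{p}}^{(t)}$ is the output of VDM; this forces $l(\tilde{\mathbf{p}}^{(t)})$ and $l(\mathbf{p}^{(t)})$ to share the same limit. It then handles the VDM-index discontinuity not via a Zangwill closed-map formulation but by a pigeonhole trick: along any subsequence $\mathbf{p}^{(t_j)}\to\mathbf{p}^\ast$, some single index $k$ is selected infinitely often, and for that \emph{fixed} $k$ the map $\mathbf{p}\mapsto VDM(k,\mathbf{p})$ is continuous (after disposing of a degenerate case). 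Continuity of the $d_j$'s also carries the maximizing property of $k$ to the limit. This confirms your closing observation that only Step~1 is really used---but the argument has to be arranged so that the continuity of Steps~2--3 is never invoked, rather than asserted.
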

\begin{proof}
The proof is similar to that of Theorem 1 in Yu (2009). 
Let $\mathbf{\tilde{p}}^{(t)}$ denote the output of the VDM step at iteration $t$.  By monotonicity, $$l(\mathbf{p}^{(t)})\leq l(\mathbf{\tilde{p}}^{(t)})\leq l(\mathbf{p}^{(t+1)}).$$
Hence the two sequences $l(\mathbf{p}^{(t)})$ and  $l(\mathbf{\tilde{p}}^{(t)})$ tend to the same (finite) limit.  Let $\mathbf{p}^*$ be a limit point of $\mathbf{p}^{(t)}$, and let $\mathbf{p}^{(t_j)}$ be a subsequence converging to $\mathbf{p}^*$.  
Without loss of generality, we may assume that the VDM steps $\mathbf{p}^{(t_j)}\to \mathbf{\tilde{p}}^{(t_j)}$ are all performed on the same index $k=j^\#$ as in (\ref{vdm}), since at least one of the $m$ indices will appear infinitely often.  If $f_{ik}\equiv \sum_j f_{ij} p_j$ for some $\mathbf{p}=\mathbf{p}^{(t_j)}$ or $\mathbf{p}=\mathbf{p}^*$, then we can show directly that $\partial l(\mathbf{p})/\partial p_k= n$.  By the choice of $k$, we have $\partial l(\mathbf{p})/\partial p_j\leq n$ for all $1\leq j\leq m$, and hence $\mathbf{p}$ is already a global maximum by the general equivalence theorem (Lindsay 1983).  Assume $f_{ik}\not\equiv \sum_j f_{ij} p_j$ for all $\mathbf{p}=\mathbf{p}^{(t_j)}$ and $\mathbf{p}=\mathbf{p}^*$.  Then inspection of (\ref{vdm}) and (\ref{em22}) reveals that all VDM steps $\mathbf{p}^{(t_j)}\to \mathbf{\tilde{p}}^{(t_j)}$ are uniquely defined.  Moreover, when $k$ is considered fixed, the VDM mapping is continuous at $\mathbf{p}^*$.  Hence $\mathbf{\tilde{p}}^{(t_j)}$ converges to $VDM(k, \mathbf{p}^*)=\mathbf{\tilde{p}}$, say, and $l(\mathbf{\tilde{p}})=l(\mathbf{p}^*)$ as a result.  Since this VDM step $\mathbf{p}^*\to \mathbf{\tilde{p}}$ is derived in the EM framework, $\mathbf{\tilde{p}}$ being uniquely defined means that it is the unique maximizer at the M-step.  If $\mathbf{\tilde{p}}\neq \mathbf{p}^*$, then the expected complete-data log-likelihood increases strictly, and so does the observed log-likelihood, which contradicts $l(\mathbf{\tilde{p}})=l(\mathbf{p}^*)$.  It follows that $\mathbf{\tilde{p}}= \mathbf{p}^*$, i.e., $\mathbf{p}^*$ is a fixed point of the VDM mapping.  Inspection of (\ref{vdm}) and (\ref{em22}), however, shows that this fixed point must satisfy $\partial l(\mathbf{p}^*)/\partial p_k\leq n$, which implies that $\mathbf{p}^*$ is a global maximum. 
\end{proof} 

\subsection{Numerical Illustration}
This section gives a numerical illustration of the effectiveness of the cocktail algorithm.  The cocktail algorithm is compared with conventional EM, i.e., iteration (\ref{em}), the algorithm NNE+ mentioned in Section~3.2, and the vertex exchange method (VEM) of B\"{o}hning (1985).  To describe VEM, suppose the parameter estimate at iteration $t$ is $\mathbf{p}^{(t)}$.  Define $d_j=\partial l\left(\mathbf{p^{(t)}}\right)/\partial p_j$, and let $j^\#$ and $j_\#$ be indices between 1 and $m$ such that 
$$d_{j_\#}=\min_{j:\, p_j^{(t)}>0} d_j,\quad d_{j^\#}=\max_{1\leq j\leq m} d_j.$$ 
VEM sets $\mathbf{p}^{(t+1)}$ as 
\begin{equation*}
\mathbf{p}^{(t+1)}=VE\left(j^\#, j_\#, \mathbf{p}^{(t)}\right).
\end{equation*}
That is, we exchange the mass between the indices $j^\#$ and $j_\#$ so as to increase the log-likelihood.  We again employ (\ref{em22}) to choose the step-length. 

We run each of (conventional) EM, NNE+, VEM, and the cocktail algorithm until convergence and record the number of iterations and computing time.  An iteration of NNE+ consists of one iteration of VDM and the set of nearest neighbor exchanges.  
An iteration of the cocktail algorithm consists of one iteration each of NNE+ and conventional EM.  We shall concentrate on the computing time as a more objective measure of performance.  All calculations are performed on the same Sun Solaris 10 machine, and the computing time is recorded using the R function system.time().  The program is written in C and is available, together with the R interface, upon request from the author. 

Each algorithm is started from the same uniform probability vector, i.e., $p^{(0)}_j=1/m,\ j=1,\ldots, m$.  We use the common convergence criterion 
\begin{equation}
\max_{1\leq j\leq m} d_j-n \leq \epsilon,
\label{epsilon}
\end{equation}
where $d_j=\partial l(\mathbf{p})/\partial p_j$.  The theoretical basis for (\ref{epsilon}) is that for any $\mathbf{p}$ that satisfies this criterion we have
$$l(\mathbf{\hat{p}})-l(\mathbf{p})\leq \epsilon,$$
where $\mathbf{\hat{p}}$ is the MLE (Lindsay 1983, B\"{o}hning et al.\ 1996).  We choose $\epsilon=10^{-6}$ in our 
experiments. 

EM, NNE+, VEM, and the cocktail algorithm are tested on data taken from Roeder (1990) concerning the velocities of 82 galaxies.  Following Pilla and Lindsay (2001), we fit a normal finite mixture model to these data.  The means of the normal components lie on a grid of 64 equally-spaced points from 10.0 to 33.94, and the common standard deviation is $\sigma=0.95$.  The algorithms deliver the same MLE as reported by Pilla and Lindsay (2001), and their performance is recorded in Table~1.  

\begin{table}
\caption{Iteration count and computing time (in seconds) until convergence for four algorithms on the galaxy data.} 
\begin{center}
\begin{tabular}{rrrrrrrrrr}
\hline
    \multicolumn{4}{c}{Iteration count} &\multicolumn{4}{c}{Computing time}\\
     EM    & NNE+ & VEM  & Cocktail & EM   & NNE+ & VEM  & Cocktail\\
\hline
     21777 & 74   & 974  & 36       & 87   & 0.02 & 0.13 & 0.02\\
\hline
\end{tabular}
\end{center}
\end{table}

Clearly the nearest neighbor exchanges are very effective, since both NNE+ and the cocktail algorithm improve conventional EM dramatically, reducing its computing time by orders of magnitude.  Adding conventional EM to NNE+ appears to have increased the computing time per iteration, but decreased the number of iterations, so that the cocktail algorithm and NNE+ have similar overall computing time.  We remark that the cocktail algorithm is also appealing because it is easy to implement and requires virtually no tuning. 

\section{Efficient EM for Computing the NPMLE for Censored Data}
We show that the EM strategies designed in Sections~2 and 3, in particular the cocktail algorithm of Section~3, are also effective for the NPMLE problem for censored data.  Section~4.1 briefly reviews how this problem can be viewed as a problem of mixture proportions.  Section~4.2 highlights fast implementations of our algorithms.  Section~4.3 contains numerical illustrations using simulated data. 

\subsection{NPMLE for Censored Data}
Assume that failure time data collected from $n$ units are independent and identically distributed according to a  
distribution function $F$, except that they are subject to censoring.  Following Gentleman and Geyer (1994), 
assume there is an inspection time process $Q$ which is independent of the failure times, and suppose each unit is subject to inspections governed by $Q$ independently.  The observed data then consist of $n$ observation intervals $(l_i, r_i]$, where $l_i$ is the last inspection time prior to failure and $r_i$ is the first inspection time after failure for subject $i$.  Right censoring may be represented by $r_i=\infty$, and left censoring by $l_i=0$; the exact failure time is observed when $l_i$ coincides with $r_i$, i.e., when the individual is subject to continuous inspections. 

Doubly censored data arise when units are subject to both right and left censoring.  The inspection time 
process is given by a pair of random variables $(L, U)$, $L< U$.  If the failure time $T$ satisfies $L<T\leq U$, then $T$ is observed; if $T\leq L$, then the observation is left censored at $L$; if $T>U$, then the observation is right censored at $U$.  For (case 2) interval censored data, the failure time $T$ is not observed, but only known to fall within a random interval. 

In either case, let us order the distinct observation times (i.e., end points of the intervals $(l_i, r_i]$) as 
$0=z_0<z_1< \ldots <z_{m-1}<z_m=\infty$.  Denote 
$$\mathbf{p}=(p_1, \ldots, p_m),\quad p_j=F(z_j)-F(z_{j-1}),$$ 
and define the $n\times m$ matrix $(f_{ij})$ by  
$$f_{ij}=\begin{cases} 1, &  z_j\in (l_i, r_i],\\ 0, &\ {\rm otherwise}.\end{cases}$$
For notational convenience we assume that observation intervals are open at the left, and closed at the right, end points, though extension to the general case is trivial; if the $i$th observation is exact, for example, we simply set $f_{ij}=1$ if $z_j=l_i(=r_i)$ and $f_{ij}=0$ otherwise.  Seeking the maximizer $\hat{F}$ that jumps only at observed time points, we note that the log-likelihood function is exactly (\ref{llike}).  Thus all the EM-type algorithms in Sections~2 and 3 can be applied to compute the NPMLE, defined as the $\mathbf{\hat{p}}$ that maximizes (\ref{llike}). 


\subsection{Efficient Implementations}
Straightforward implementation of conventional EM, NNE+, VEM, or the cocktail algorithm, as described in Sections~2 and 3, each requires $O(nm)$ time per iteration.  For univariate censored data, however, one can take advantage of the special structure of the matrix $(f_{ij})$ to derive $O(n)$ implementations.  This is a substantial reduction since $m$ is often of the same order of magnitude as $n$.  It is especially relevant in situations such as bootstrap resampling, when repeated use of the algorithms is needed. 

Noted only briefly by Jongbloed (1998), who does not give the technical details concerning computational complexity, this possibility of fast EM implementation has remained largely unnoticed.  Zhang and Jamshidian (2004) present fast implementations for doubly censored data, but not (case 2) interval censored data.  In Appendix~A, we show that all four algorithms admit efficient implementations for univariate interval censoring in general. 

\subsection{Evaluation of Algorithms on Doubly Censored Data} 
In this section we evaluate the effectiveness of EM, NNE+, VEM, and the cocktail algorithm for computing the NPMLE for doubly censored data.  Simulations are performed under conditions similar to those of Wellner and Zhan (1997), and Zhang and Jamshidian (2004).  Wellner and Zhan (1997) propose an effective algorithm which combines EM and iterative convex minorant (ICM; Jongbloed 1998) iterations.  We have decided to focus on evaluating the cocktail algorithm relative to EM, NNE+ and VEM because they are easy to describe and easy to implement.  A full evaluation, including the case of bivariate censoring, is work in progress. 

Our simulation setting is as follows.  The failure time $T_i$ for unit $i$ is generated as an independent 
exponential random variable with mean 1, and the censoring variables $L_i, U_i$ are generated as the $q_1$th and $q_2$th 
order statistics of 20 independent ${\rm uniform}(0,1)$ random variables.  The resulting observation is 
$$\begin{array}{rl}
T_i, & {\rm if\ } L_i<T_i\leq U_i;\\
(0, L_i], & {\rm if\ } T_i\leq L_i;\\
(U_i, \infty], & {\rm if\ } T_i> U_i.
\end{array}
$$
By adjusting $q_1$ and $q_2$ we obtain varying degrees of censoring. 

As in Section~3.3, all algorithms are started at the uniform probability vector, and the common convergence criterion is (\ref{epsilon}) with $\epsilon=10^{-6}$.  Our limited experience suggests that VEM may benefit from a starting value with fewer support points, but the cocktail algorithm is relatively insensitive to the initial number of support points.  Again, an iteration of the cocktail algorithm consists of an iteration of VDM, the set of nearest neighbor exchanges, and an iteration of conventional EM. 

Based on 10 replications, Tables~2 and 3 display the means and standard deviations of the number of iterations and computing 
time (in seconds) until convergence for EM, NNE+, VEM and the cocktail algorithm.  The input data are generated with $q_1=3$ and $q_2=18$ 
(moderate censoring) for Table~2, and with $q_1=8$ and $q_2=12$ (heavy censoring) for Table~3.

\begin{table}
\caption{Means and standard deviations of the number of iterations and computing time (in seconds) until convergence for four algorithms for doubly censored data.  Input data are generated with $q_1=3$ and $q_2=18$.}
\begin{center}
\begin{tabular}{lrrrrrrrr}
\hline
\        &\multicolumn{4}{c}{Iteration count} &\multicolumn{4}{c}{Computing time}\\
\        & EM    & NNE+   & VEM    & Cocktail & EM    & NNE+  & VEM    & Cocktail\\
\hline
\multicolumn{9}{c}{$n=1000$}\\
mean     & 5076  &  15456 & 5714   & 46.2     & 3.22  & 20.5  & 0.80   & 0.07    \\
s.d.     & 194   &  488   & 206    & 7.7      & 0.20  &  1.2  & 0.04   & 0.01    \\
\\
\multicolumn{9}{c}{$n=2000$}\\  
mean     & 9920 &  44044 & 12008  & 67.3     & 17.0  & 157    & 3.77   & 0.28 \\  
s.d.     & 892   &  1645  & 269    & 7.3      &  2.2  &   9   & 0.15   & 0.08 \\
\\
\multicolumn{9}{c}{$n=4000$}\\
mean     & 20347 & 100000+ & 24924 & 93.3     & 85.6   & 740+   & 19.1   & 0.77 \\ 
s.d.     &  1056  &        & 369    & 7.9     &  5.2   &        &  0.7   & 0.06 \\
\hline
\end{tabular}
\end{center}
\end{table}

\begin{table}
\caption{Means and standard deviations of the iteration count and computing time (in seconds) until convergence for four algorithms for doubly censored data.  Input data are generated with $q_1=8$ and $q_2=12$.} 
\begin{center}
\begin{tabular}{lrrrrrrrr}
\hline
\        &\multicolumn{4}{c}{Iteration count} &\multicolumn{4}{c}{Computing time}  \\
\        & EM     & NNE+  & VEM   & Cocktail & EM      & NNE+   & VEM    & Cocktail\\
\hline
\multicolumn{9}{c}{$n=1000$}\\
mean     & 5793   & 2768  & 2170  & 65.3     & 7.77    &  1.70  & 0.38   & 0.05    \\
s.d.     & 877    &  421  & 170   & 8.3      & 1.76    &  0.34  & 0.11   & 0.01    \\
\\
\multicolumn{9}{c}{$n=2000$}\\
mean     & 11034  &  8669 & 4411  & 103     & 38.0     &  13.0  & 1.56   & 0.20    \\
s.d.     &  2022  &  491  & 192   &  8      &  8.3     &  1.0   &  0.08  & 0.01    \\
\\
\multicolumn{9}{c}{$n=4000$}\\
mean     & 20397  & 27247 & 9176  & 145      & 163     &  89.8  & 8.76   & 0.61    \\
s.d.     & 5481   & 1957  & 336   & 17       & 51      &  8.8   & 0.57   & 0.08    \\
\hline
\end{tabular}
\end{center}
\end{table} 

In either situation we see that the cocktail algorithm is a dramatic improvement; it reduces the computing time of (conventional) EM or NNE+ by large factors, the reduction being more significant as $n$, the number of units, becomes larger.  The improvement is especially remarkable because the cocktail algorithm is a direct combination of EM and NNE+, each of which is very slow.  NNE+ performs much worse for moderately
censored data than for heavily censored data.  In the case of $q_1=3,\ q_2=18$ and $n=4000$, each of the 10 runs of NNE+ takes more than 100000 iterations.  As expected, each 
algorithm takes more iterations as $n$ increases.  Somewhat 
unexpectedly, for the same $n$, EM has similar iteration 
counts for moderately versus heavily censored data.  Yet the 
computing time of EM in the heavily censored case is significantly higher.  Another peculiarity is 
that, for the same $n$, VEM takes fewer iterations and less time for heavily censored data than 
for moderately censored data.  But the main feature in Tables~2 and 3 is the clear superiority of the 
cocktail algorithm in this example. 

\section{Discussion}
We have shown how to use efficient data augmentation to design fast EM-type algorithms for maximizing a mixture log-likelihood with known component densities.  Squeezing strategies are presented that take advantage of the overlap between components.  A cocktail algorithm that combines conventional EM with a nearest neighbor exchange 
strategy is found to perform very well for computing the NPMLE for censored data, which
is the intended application area of this work. 

The nearest neighbor exchange strategy works well with conventional EM when there is a natural ordering of the
mixture components, as in the case of univariate censored data.  It would be interesting to extend such algorithms to bivariate censoring (Betensky and Finkelstein 1999), where we observe a pair of possibly censored random variables for each unit.  Bivariate censoring presents many inferential and computational challenges.  Work on extending the effective nearest neighbor strategy is in progress, with encouraging preliminary results.  Extensions that accommodate truncation in addition to censoring, or that facilitate semi-parametric estimation, would also be desirable. 

It would be interesting to extend the squeezing strategies of Section~2 to mixture problems with 
unknown parameters in the component densities.  One approach is to again adopt AECM (Meng and van Dyk 1997),
and perform two types of EM-based maximization steps, one for the mixture proportions given the other parameters, and one for the other parameters given the mixture proportions.  The squeezing strategies can be used at the former maximization step.  It would be worthwhile to investigate the potential gain of using such strategies. 

\section*{Acknowledgment}
The author would like to thank Cliff Anderson-Bergman for helpful discussions on the squeezing strategies of 
Section~2. 

\appendix

\section*{Appendix A: Fast Implementations for the NPMLE Problem with Censored Data}
The conventional EM mapping, (\ref{em}), contains a summation of $n$ terms for each $j=1,\ldots, m$.  However, one can take advantage of the special structure of the matrix $(f_{ij})$ for fast computation.  This is a 
zero-one matrix whose non-zero entries are consecutive in each row.  Equivalently, if we let
\begin{equation}
a(i)=\min \{j:\ z_j\in (l_i, r_i]\},\quad b(i)=\max \{j:\ z_j\in (l_i, r_i]\},\quad i=1,\ldots, n,
\label{sp}
\end{equation}
then $f_{ij}=1$ if $a(i)\leq j\leq b(i)$ and $f_{ij}=0$ otherwise.  For convenience, we drop the 
superscripts in (\ref{em}) and focus on how to efficiently compute
$$p_j^{new}=\frac{1}{n}\sum_{i=1}^n \left(\frac{f_{ij}}{\sum_{k=1}^m f_{ik} p_k}\right) p_j,\quad j=1, \ldots, m,$$
for any $\mathbf{p}=(p_1, \ldots, p_m)\in \Theta$.  Note that computing $\eta_i=\sum_{k=1}^m f_{ik} p_k,\ 
i=1,\ldots, n,$ can be done in $O(m+n)$ time (or equivalently $O(n)$ time because $m\leq2n+1$) using Algorithm 1.

\begin{paragraph}
{\it Algorithm 1}
\begin{description}
\item[Step 1]  Calculate the cumulative sums $s_j=\sum_{k=1}^j p_k$, $j=1, \ldots, m$.  Set $s_0=0$.

\item[Step 2]  Set $\eta_i=s_{b(i)}-s_{a(i)-1}$, $i=1, \ldots, n$.
\end{description}
\end{paragraph}

If we can calculate
\begin{equation}
d_j=\sum_{i=1}^n f_{ij}/\eta_i=\sum_{i:\, a(i)\leq j\leq b(i)} 1/\eta_i,
\label{d} 
\end{equation}
then $p_j^{new}=d_jp_j/n$.  To calculate $d_j$ efficiently, we rely on the following algorithm.

\begin{paragraph}
{\it Algorithm 2}
\begin{description}
\item[Step 1]  Initialize $d_j=0,\ j=1, \ldots, m$.

\item[Step 2]  For $i=1, \ldots, n$, add $1/\eta_i$ to $d_{a(i)}$, and, if $b(i)+1\leq m$, subtract $1/\eta_i$ from
$d_{b(i)+1}$.

\item[Step 3]  Replace $(d_1, \ldots, d_m)$ by its cumulative sum, i.e., for $j=2, \ldots, m$, add $d_{j-1}$ to $d_j$.
\end{description}
\end{paragraph}

Obviously, Algorithm 2 costs $O(n)$ time.  We have

\begin{proposition}
Algorithm 2 is valid, i.e., its output agrees with (\ref{d}).
\label{valid1}
\end{proposition}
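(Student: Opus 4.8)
The plan is to follow the array $(d_1,\ldots,d_m)$ through the three steps of Algorithm~2 and verify that the final contents match (\ref{d}) term by term. First I would let $e_j$ denote the value stored in $d_j$ immediately after Step~2. Step~2 adds $1/\eta_i$ to position $a(i)$ and subtracts $1/\eta_i$ from position $b(i)+1$ (when the latter is $\le m$), once for each $i$, so summing these contributions gives
\[
e_j=\sum_{i:\,a(i)=j}\frac{1}{\eta_i}\;-\;\sum_{i:\,b(i)+1=j}\frac{1}{\eta_i},\qquad j=1,\ldots,m,
\]
with the convention that the second sum is empty when $j=1$, and where dropping the updates with $b(i)+1=m+1$ is harmless because they could only ever affect the nonexistent index $j=m+1$.

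Next I would apply Step~3, which overwrites $d_j$ by the cumulative sum $\sum_{l=1}^{j}e_l$. Substituting the expression for $e_l$ and interchanging the order of summation,
\[
d_j=\sum_{l=1}^{j}e_l=\sum_{i:\,a(i)\le j}\frac{1}{\eta_i}\;-\;\sum_{i:\,b(i)+1\le j}\frac{1}{\eta_i}=\sum_{i:\,a(i)\le j}\frac{1}{\eta_i}\;-\;\sum_{i:\,b(i)<j}\frac{1}{\eta_i}.
\]
The remaining step is a set-theoretic observation: since $a(i)\le b(i)$ for every $i$, the index set $\{i:\,b(i)<j\}$ is contained in $\{i:\,a(i)\le j\}$, so the difference of the two sums equals the sum of $1/\eta_i$ over the set difference, which is precisely $\{i:\,a(i)\le j\le b(i)\}$. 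This yields $d_j=\sum_{i:\,a(i)\le j\le b(i)}1/\eta_i$, i.e.\ (\ref{d}).

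There is no genuine obstacle here; the only point that deserves a moment's care is the boundary bookkeeping—confirming that omitting the update to $d_{b(i)+1}$ when $b(i)=m$ leaves every $d_j$ with $1\le j\le m$ unchanged, and that the $j=1$ instance of the identity for $e_j$ is consistent with an empty second sum. Both are immediate once one notes that the dropped terms concern only the phantom index $m+1$, which never enters any cumulative sum up to $j\le m$.
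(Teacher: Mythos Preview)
Your proof is correct and follows essentially the same approach as the paper's: both recognize that Step~2 stores the first differences of the target sequence and Step~3 recovers it by cumulative summation. The paper phrases the verification as an induction on $j$ (checking the base case $d_1$ and then the increment $d_j-d_{j-1}$), whereas you compute the cumulative sum $\sum_{l\le j}e_l$ directly and simplify via the set inclusion $\{i:b(i)<j\}\subseteq\{i:a(i)\le j\}$, but the underlying idea is identical.
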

\begin{proof}
This can be shown by induction.  First, in the output of Algorithm 2, $d_1=\sum_{i:\, a(i)=1} 1/\eta_i$,
which agrees with (\ref{d}).  Assume Algorithm 2 gives the correct answer for $d_{j-1},\ j>1$.  Then Algorithm 2 computes
$d_j$ using $$d_j=d_{j-1}+\sum_{i:\, a(i)=j} 1/\eta_i-\sum_{i:\, b(i)+1=j} 1/\eta_i,$$
which again agrees with (\ref{d}) if we consider the difference $d_j-d_{j-1}$.  By the induction principle, all $d_j,\ j=1,
\ldots, m$, are correctly computed.
\end{proof}

Algorithms~1 \& 2 clearly give an $O(n)$ implementation of a conventional EM iteration.  Because $\eta_i,\ i=1,\ldots, n$ and $d_j,\
j=1,\ldots, m,$ are also the key quantities for VDM and VEM, the same efficient implementation applies to VDM and VEM. 
Specifically, the underlying iteration (\ref{em22}) is done in $O(n)$ time by keeping track of $\eta_i$. 

For NNE+ and the cocktail algorithm, we notice that each sub-step of nearest neighbor exchange in (\ref{nne}) affects only a limited 
number of terms in the log-likelihood.  Specifically, to implement sub-step $k$, only observation intervals $(l_i, r_i]$ 
such that either $a(i)\leq j_k\leq b(i)$ and $j_{k+1}> b(i)$, or $j_k<a(i)$ and $a(i)\leq j_{k+1}\leq b(i)$, need be 
considered.  Define the set 
$$V_k=\{i:\ a(i)\leq j_k\leq b(i),\ j_{k+1}> b(i)\}\cup \{i:\ j_k<a(i),\ a(i)\leq j_{k+1}\leq b(i)\}.$$ 
The time cost of sub-step $k$ is proportional to $|V_k|$, the number of entries in $V_k$.  However, because each $i$ belongs to at most two of $V_k,\ k=1,\ldots, q$, the total number of entries satisfy $\sum_{k=1}^{q} |V_k|\leq 2n$.  Hence, with a bit of bookkeeping, the entire set of nearest neighbor exchanges can be implemented in $O(n)$ time per iteration. 

{\bf Remark.} Algorithms 1 and 2 take $a(i),\ b(i),\ i=1,\ldots, n$, given by (\ref{sp}), as input.  Setting these up requires sorting 
the end points of the observation intervals $(l_i, r_i]$, which costs $O(n\log n)$ time.  Although slightly higher 
than the $O(n)$ per-iteration cost, this is typically a small fraction of the total computing time because of the required 
number of iterations.  Setting up the full matrix $(f_{ij})$, on the other hand, costs $O(mn)$ time, which is $O(n^2)$ in the worst 
case.

\end{document}